\providecommand{\U}[1]{\protect\rule{.1in}{.1in}}
\providecommand{\U}[1]{\protect\rule{.1in}{.1in}}
\providecommand{\U}[1]{\protect\rule{.1in}{.1in}}
\providecommand{\U}[1]{\protect\rule{.1in}{.1in}}
\newtheorem{theorem}{Theorem}
\newtheorem{lemma}[theorem]{Lemma}
\newtheorem{proposition}[theorem]{Proposition}
\newenvironment{proof}[1][Proof]{\noindent\textbf{#1.} }{\ \rule{0.5em}{0.5em}\vspace{1ex}}
\newcommand{\Lmax}{L_{\max}}
\newcommand{\LmaxOPT}{L_{\max}^{\ast}}
\newcommand{\etal} 	{{\it et al.}}
\begin{document}

\title{Approximation Algorithms for the Open Shop Problem with Delivery Times}
\author{Imed KACEM\thanks{LCOMS, Universit\'{e} de Lorraine, Ile du Saulcy, Metz
57000, France. Contact: \texttt{imed.kacem@univ-lorraine.fr}} and Christophe
RAPINE\thanks{LGIPM, Universit\'{e} de Lorraine, Ile du Saulcy, Metz 57000,
France. Contact : \texttt{christophe.rapine@univ-lorraine.fr}}}
\date{\ \ }
\maketitle

\begin{abstract}
In this paper we consider the open shop scheduling problem where the jobs have
delivery times. The minimization criterion is the maximum lateness of the
jobs. This problem is known to be NP-hard, even restricted to only 2 machines.
We establish that any list scheduling algorithm has a performance ratio of
$2$. For a fixed number of machines, we design a polynomial time approximation
scheme (PTAS) which represents the best possible result due to the strong
NP-hardness of the problem.
\end{abstract}

{\bf Keywords:} Scheduling ; Open Shop ; Maximum Lateness ; Approximation ; PTAS

\section{Introduction}


\textbf{Problem description.} We consider the open shop problem with delivery
times. We have a set $\mathcal{J}=\{1,2,...,n\}$ of $n$ jobs to be performed
on a set of $m$ machines $\mathcal{M}_{1}$, $\mathcal{M}_{2}$, $\mathcal{M}_{3}%
$.... $\mathcal{M}_{m}$.
Each job $j$ consists of exactly $m$ operations $O_{i,j}$ ($i\in
\{1,2,...,m\}$) and has a delivery time $q_{j}$, that we assume non negative.
For every job $j$ and every index $i$, operation $O_{i,j}$ should be performed
on machine $\mathcal{M}_{i}$. The processing time of each operation $O_{i,j}$
is denoted by $p_{i,j}$.
At any time, a job can be processed by at most one machine. Moreover, any
machine can process only one job at a time. Preemption of operations is not
allowed. We denote by $C_{i,j}$ the completion time of operation $O_{i,j}$.
For every job $j $, its completion time $C_{j}$ is defined as the completion
time of its last operation. The lateness $L_{j}$ of job $j$ is equal to
$C_{j}+q_{j}$. The objective is to find a feasible schedule that minimizes the
maximum lateness $L_{\max}$, where%

\begin{equation}
L_{\max}=\max_{1\leq j\leq n}\left\{  L_{j}\right\}
\end{equation}

For any feasible schedule $\pi$, we denote the resulting maximum lateness by
$L_{\max}\left(  \pi\right)  $. Moreover, $L_{\max}^{\ast}$ denotes the
maximum lateness of an optimal solution $\pi^{\ast}$, that is, $L_{\max}%
^{\ast}=$ $L_{\max}\left(  \pi^{\ast}\right)  $. According to the tertiary
notation, the problem is denoted as $O||L_{\max}$.


Recall that a constant approximation algorithm of performance ratio $\gamma
\ge1$ (or a $\gamma$-approximation) is a polynomial time algorithm that
provides a schedule with maximum lateness no greater than $\gamma L_{\max
}^{\ast}$ for every instance. A polynomial time approximation scheme (PTAS) is
a family of ($1+\varepsilon$)-approximation algorithms of a polynomial time
complexity for any fixed $\varepsilon>0$. If this time complexity is
polynomial in $1/\varepsilon$ and in the input size then we have a fully
polynomial time approximation scheme (FPTAS).

\vspace{1ex}

\textbf{Related approximation results.}
According to the best of our knowledge, the design of approximation algorithms
has not yet been addressed for problem $O||L_{\max}$. However, some
inapproximability results have been established in the literature. For a fixed
number of machines, unless \texttt{P=NP}, problem $Om||L_{\max}$ cannot admit
an FPTAS since it is \texttt{NP}-hard in the strong sense on two
machines~\cite{Lawler},~\cite{Lawler2}. The existence of a PTAS for a fixed
$m$ is an open question, that we answer positively in this paper. If the
number $m$ of machines is part of the inputs, Williamson \textit{et
al}~\cite{Williamson} proved that no polynomial time approximation algorithm
with a performance guarantee lower than $5/4$ can exist, , unless \texttt{P=NP}, which precludes the existence of a PTAS.
Several interesting results exist for some related problems, mainly to
minimize the makespan :

\begin{itemize}
\item[$\bullet$] Lawler \textit{et al}~\cite{Lawler}-\cite{Lawler2} presented
a polynomial algorithm for problem $O2|\mbox{\it pmtn}|L_{\max}$. In contrast,
when preemption is not allowed, they proved that problem $O2||L_{\max}$ is
strongly \texttt{NP}-hard, as mentioned above.

\item[$\bullet$] Gonzales and Sahni \cite{Gonzales} proved that problem
$Om||C_{\max}$ is polynomial for $m=2$ and becomes \texttt{NP}-hard when
$m\geq3$ .

\item[$\bullet$] Sevastianov and Woeginger \cite{Sevastainov} established the
existence of a PTAS for problem $Om||C_{\max}$ when $m$ is fixed.

\item[$\bullet$] Kononov and Sviridenko~\cite{Kononov} proposed a PTAS for
problem $Oq(Pm)|r_{ij}|C_{\max}$ when $q$ and $m$ are fixed.

\item[$\bullet$] Approximation algorithms have been recently proposed for
other variants such as the two-machine routing open shop problem. A sample of
them includes Chernykh \textit{el al}~\cite{Chernykh} and Averbakh \textit{et
al}~\cite{Averbakh}.
\end{itemize}
\
Finally, we refer to the state-of-the-art paper on scheduling problems under the maximum lateness minimization by Kellerer \cite{Kelate}.
\vspace*{1ex}

\textbf{Contribution.} Unless \texttt{P=NP}, problem $Om||L_{\max}$ cannot
admit an FPTAS since it is \texttt{NP}-hard in the strong sense on two
machines. Hence, the best possible approximation algorithm is a PTAS. In this
paper, we prove the existence of such an algorithm for a fixed number of
machines, and thus gives a positive answer to this open problem. Moreover, we
provide the analysis of some simple constant approximation algorithms when the
number of machines is a part of the inputs.

\vspace*{1ex}

\textbf{Organization of the paper.} Section 2 present some simple preliminary
approximation results on list scheduling algorithms. In Section 3, we describe
our PTAS and we provide the analysis of such a scheme. Finally, we give some
concluding remarks in Section 4.

\section{Approximation Ratio of List Scheduling Algorithms}


List scheduling algorithms are popular methods in scheduling theory. Recall
that a list scheduling algorithm relies on a greedy allocation of the
operations to the resources that prevents any machine to be inactive while an
operation is available to be performed. If several operations are concurrently
available, ties are broken using a priority list. We call a \textit{list
schedule} the solution produced by a list scheduling algorithm. We establish
that any list scheduling algorithm has a performance guarantee of $2$,
whatever its priority rule. Our analysis relies on $2$ immediate lower bounds,
namely the conservation of the work and the critical path. Let us denote
\[
P=\max_{i=1,...,m}\{\sum_{j=1}^{n}p_{ij}\}\ \mbox{ and }\ Q=\max
_{j=1,...,n}\{\sum_{i=1}^{m}p_{ij}+q_{j}\}
\]
Clearly $L_{\max}^{\ast}\geq P$ and $L_{\max}^{\ast}\geq Q$. We have the
following result :

\begin{proposition}
\label{Proposition:list} Any list scheduling algorithm is a $2$-approximation
algorithm for problem $O||L_{\max}$. More precisely, for any list schedule
$\pi$, $L_{\max}(\pi) \le P + Q$
\end{proposition}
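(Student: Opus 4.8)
The plan is to bound the maximum lateness of any list schedule by analyzing the job $j$ that realizes $L_{\max}(\pi)$, and tracing back the history of its last-completed operation. Let $O_{i,j}$ be the operation of job $j$ that finishes last, so that $C_j = C_{i,j}$ and $L_{\max}(\pi) = C_{i,j} + q_j$. The key structural fact about list schedules is that a machine is never idle while an operation that could run on it is available. I would therefore partition the time interval $[0, C_{i,j}]$ into two types of subintervals: those during which machine $\mathcal{M}_i$ is busy, and those during which it is idle. The total length of the busy intervals is at most $\sum_{k=1}^{n} p_{i,k} \le P$, since that is all the work machine $\mathcal{M}_i$ ever does.

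The heart of the argument is to control the idle time of $\mathcal{M}_i$ before $C_{i,j}$. During any such idle interval, operation $O_{i,j}$ itself has not yet started (otherwise the machine would be busy processing it or it would already be finished); by the greedy/non-idling property, the only reason $\mathcal{M}_i$ can sit idle while $O_{i,j}$ waits is that job $j$ is simultaneously being processed on some other machine. Thus every unit of idle time of $\mathcal{M}_i$ in $[0, C_{i,j}]$ can be charged to a moment at which some operation $O_{i',j}$ of the \emph{same} job $j$ is running on another machine $\mathcal{M}_{i'}$. Since a job occupies at most one machine at a time, these charged intervals are disjoint and their total length is at most $\sum_{i' \ne i} p_{i',j} \le \sum_{i'=1}^{m} p_{i',j}$.

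Combining the two bounds gives $C_{i,j} \le P + \sum_{i'=1}^{m} p_{i',j}$, and hence
\[
L_{\max}(\pi) = C_{i,j} + q_j \le P + \sum_{i'=1}^{m} p_{i',j} + q_j \le P + Q,
\]
where the last inequality uses the definition of $Q$ as a maximum over all jobs. Since $L_{\max}^{\ast} \ge P$ and $L_{\max}^{\ast} \ge Q$, we conclude $L_{\max}(\pi) \le P + Q \le 2 L_{\max}^{\ast}$, which establishes both the refined additive bound and the claimed factor of $2$.

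The step I expect to be the main obstacle is making rigorous the charging argument for the idle time, namely the claim that whenever $\mathcal{M}_i$ is idle before $O_{i,j}$ finishes, job $j$ must be busy elsewhere. One has to argue carefully that $O_{i,j}$ is genuinely \emph{available} (all its predecessor constraints — here just machine availability — are vacuous in an open shop) during those idle periods, so the only admissible explanation for the machine's idleness under a list policy is a conflict with another operation of the same job. Handling the boundary cases cleanly (the interval right before $O_{i,j}$ starts, versus any idle gaps interleaved with other operations of job $j$ on $\mathcal{M}_i$) is where the care is needed, but the disjointness of the charged intervals follows directly from the constraint that a job runs on at most one machine at any instant.
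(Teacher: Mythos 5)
Your proposal is correct and follows essentially the same argument as the paper: identify the critical job, bound the busy time of its last machine by $P$, and charge every idle instant of that machine before the job's completion to an operation of the same job running elsewhere, yielding an idle-time bound of $\sum_{i'} p_{i',j}$ and hence $L_{\max}(\pi) \le P + Q \le 2L_{\max}^{\ast}$. The only cosmetic difference is that you phrase it as a single charging argument whereas the paper splits into the two cases ``no idle time'' and ``some idle time $I$''; the content is identical.
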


\begin{proof}
Consider a list schedule $\pi$, and let $u$ be a job such that $L_{u}=\Lmax(\pi)$. Without loss of generality, we can assume that the last operation of $u$ is scheduled on the first machine. We consider $2$ cases : either an
idle-time occurs on $\mathcal{M}_{1}$ before the completion of job $u$, or
not. If there is no idle time on $\mathcal{M}_{1}$, then $L_{u}\leq
P+q_{u}\leq P+Q$. Otherwise, let us denote by $I$ the total idle time occuring
on $\mathcal{M}_{1}$ before the completion time of job $u$. We have $L_{u}\leq
P+I+q_{u}$. Notice that job $u$ could not have been available on machine
$\mathcal{M}_{1}$ at any idle instant, otherwise, due to the principle of list
scheduling algorithms, it would have been scheduled. As a consequence, an
operation of job $u$ is performed on another machine at every idle instant of
$\mathcal{M}_{1}$ before $C_{u}$. Hence, we can bound the idle time $I$ by the
total processing time of job $u$. We have :
\[
L_{u}\leq P+I+q_{u}\leq P+\sum_{i=1}^{m}p_{iu}+q_{u}\leq P+Q
\]
We can conclude that in any case $L_{\max}(\pi)\leq P+Q \le 2\LmaxOPT$
\end{proof}

Notice that good \textit{a posteriori} performances can be achieved by a list
scheduling algorithm, for instance if the workload $P$ is large compared with
the critical path $Q$. One natural question is whether some better
approximation ratios can be obtained with particular lists. It is a folklore
that minimizing the maximum lateness on one ressource can be achieved by
sequencing the tasks in non-increasing order of their delivery times. This
sequence is known as Jackson's order. One can wonder if a list scheduling
algorithm using Jackson's order as its list performed better in the worst
case. The answer is negative. The following proposition states that the
analysis of Proposition~\ref{Proposition:list} is tight whatever the list.

\begin{proposition}
No list scheduling algorithm can have a performance ratio less than $2$ for
problem $O2||L_{\max}$.
\end{proposition}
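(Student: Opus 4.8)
The plan is to construct, for any $\varepsilon > 0$, a two-machine instance on which every list schedule yields a maximum lateness arbitrarily close to twice the optimum. Since Proposition~\ref{Proposition:list} already gives an upper bound of $2$ and establishes tightness of the $P+Q$ estimate is what we need to contradict, the goal is a family of instances where $\Lmax(\pi)/\LmaxOPT \to 2$. The key tension to exploit is that a list scheduling algorithm cannot leave a machine idle while work is available, yet an optimal schedule may strategically delay a job to avoid a long tail of delivery time. So I would design an instance with one ``long'' job whose delivery time is large, and several small jobs whose operations can be arranged to force the long job's operations to be processed late on both machines under any greedy rule.

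First I would set up the instance explicitly. A natural candidate uses a few jobs with unit or near-unit operations on $\mathcal{M}_1$ and $\mathcal{M}_2$, together with one special job that has a small processing time but a very large delivery time $q$. In the optimal schedule, the special job is processed first (or early) on both machines so that its completion time $C_j$ is tiny and its lateness is essentially $q$; meanwhile the remaining jobs fill the machines, and since their delivery times are negligible their lateness stays below $q$. This makes $\LmaxOPT$ close to $q$, matching the critical-path lower bound $Q$. The construction should force $P$ to be comparable to $Q$, i.e. the total workload on each machine is roughly $q$ as well, so that the bound $P+Q \approx 2q$ becomes achievable.

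Next I would argue that any list scheduling algorithm is forced into a bad schedule. Because a list schedule never idles a machine when an operation is ready, the small jobs get scheduled immediately at time $0$, occupying both machines and pushing the special job's operations to the end. The crux is to arrange the operations so that the special job cannot complete before time $\approx P$ on the machines, regardless of how ties are broken and regardless of the priority list --- including Jackson's order. This is the point the proposition specifically emphasizes (``whatever the list''), so the instance must be robust: even a list that prioritizes the high-$q$ job should fail, because the open-shop structure lets the small jobs' operations block both machines in a way that the special job's two operations cannot both be scheduled early. Then the special job completes near time $P$ and incurs lateness $\approx P + q \approx 2q$, giving the ratio $\to 2$.

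The main obstacle I anticipate is the robustness requirement: defeating \emph{every} priority rule simultaneously, not just a single greedy rule. In particular a list that gives the special job top priority must still be forced into the bad schedule, which means the blocking must come from the open-shop feasibility constraints (a job occupies at most one machine at a time, so the special job's two operations cannot be run in parallel) rather than from priority choices. I would handle this by ensuring that at time $0$ the small jobs monopolize both machines for operations that are genuinely available, so the special job's first operation can start early on only one machine, and by the time it is free to use the second machine that machine is already committed to a long sequence of small-job operations. Verifying this for all tie-breaking decisions, and tuning the parameters (number of small jobs, the magnitude of $q$, and the unit processing times) so that the ratio provably converges to $2$ as $\varepsilon \to 0$, is the technical heart of the argument.
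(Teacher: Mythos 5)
Your plan is exactly the paper's approach---one job with a large delivery time that an optimal schedule runs first at the cost of an idle slot, plus filler jobs that any list rule must start at time $0$---and the mechanism you identify (the open-shop constraint that the special job can occupy only one machine at a time, so at least one machine must start a filler whatever the priority list) is precisely the right one. However, as written your proposal is a plan rather than a proof: you never exhibit the instance, and you explicitly defer the ``technical heart'' of verifying robustness against every list and every tie-break. That verification is where the proof actually lives, and it is much simpler than you anticipate. Take exactly one filler job per machine, each consisting of a single operation of length $a$ (and a null operation on the other machine) with delivery time $0$, and one special job with a unit operation on each machine and delivery time $a$. The optimal schedule runs the special job first on both machines, accepting one idle time unit, then the fillers, giving $L_{\max}^{\ast}=a+2$. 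In any list schedule the special job can start on at most one of the two machines at time $0$; the other machine sees only its filler available and must start it, so the special job's second operation cannot begin before time $a$, its completion time is at least $a+1$, and its lateness is at least $2a+1$. The ratio $(2a+1)/(a+2)$ tends to $2$. So your approach is sound, but to turn it into a proof you must commit to a concrete instance; the minimal three-job instance above already defeats every priority rule, with no parameter tuning or case analysis over lists required.
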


\begin{proof}
Consider the following instance: we have $3$ jobs to schedule on $2$ machines.
Jobs $1$ and $2$ have only one (non null) operation to perform, respectively
on machine $\mathcal{M}_{1}$ and $\mathcal{M}_{2}$. The duration of the
operation is equal to $a$ time units, where $a\geq1$ is a parameter of the
instance. Both delivery times are null. Job $3$ has one unit operation to
perform on each machine, and its delivery time is $q_{3}=a$.

An optimal schedule sequences first Job $3$ on both machines, creating an idle
time at the first time slot, and then performs Jobs $1$ and $2$. That is, the
optimal sequence is $(3,1)$ on $\mathcal{M}_{1}$ and $(3,2)$ on $\mathcal{M}%
_{2}$. The maximum lateness is equal to $L_{\max}^{\ast}=a+2$. Notice that
this schedule cannot be obtained by a list scheduling algorithm, since an idle
time occurs at the first instant while a job (either $1$ or $2$) is available.
Indeed, it is easy to see that, whatever the list, either Job $1$ or Job $2$
is scheduled at time $0$ by a list scheduling algorithm. As a consequence, Job
$3$ cannot complete before time $a+1$ in a list schedule $\pi$. Hence,
$L_{\max}(\pi)\geq2a+1$. The ratio for this instance is $\frac{2a+1}{a+2}$,
which tends to $2$ when $a$ tends to $+\infty$.
\end{proof}

\begin{figure}
\begin{center}
    \includegraphics[width=1.3\textwidth]{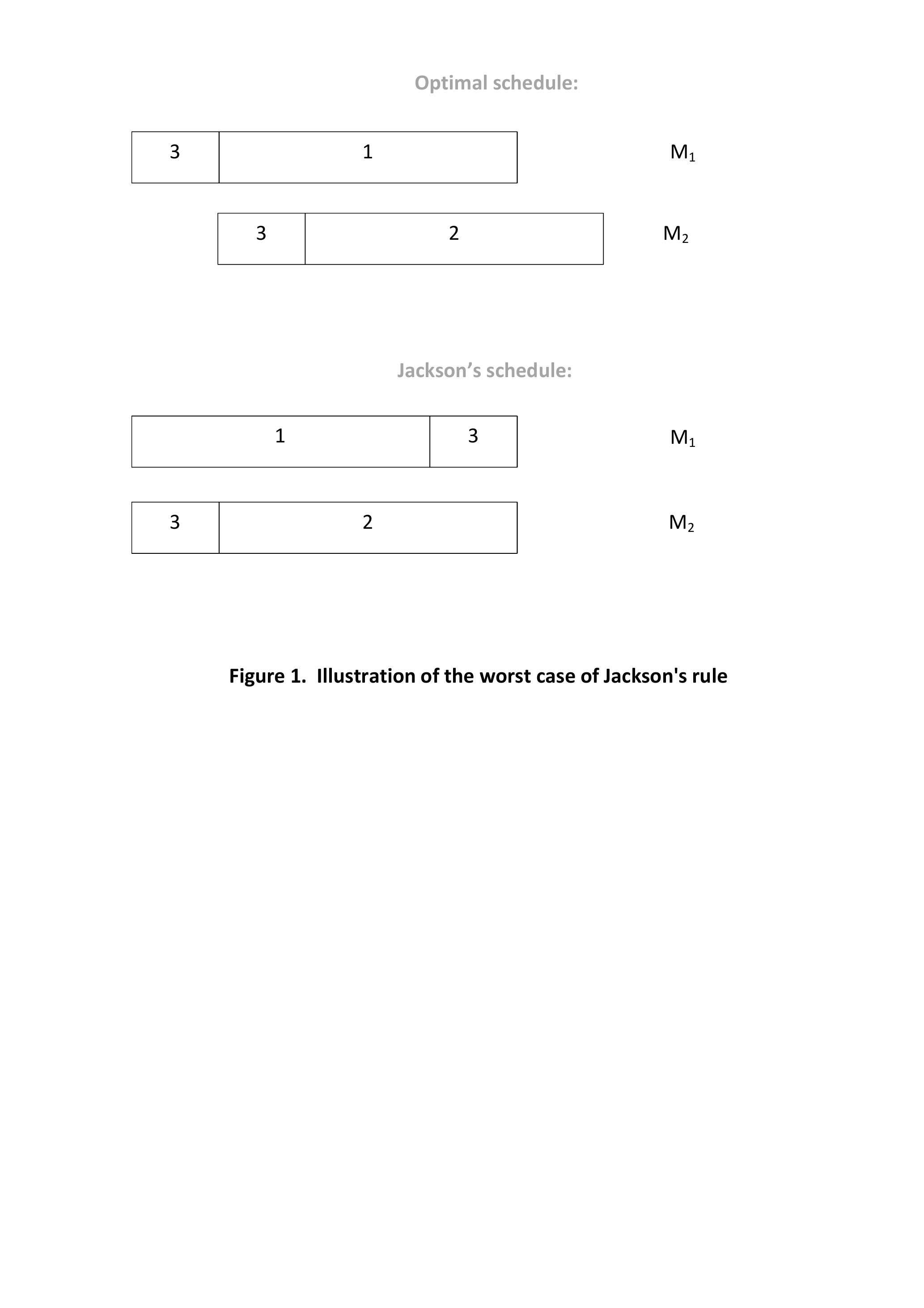}
\end{center}
\end{figure}


As a conclusion, Jackson's list does not perform better that any other list in
the worst case. Nevertheless, we use it extensively in the PTAS that we
present in the next section.

\section{PTAS}



In this section, we present the first PTAS for problem $Om||L_{\max}$, that
is, when the number of machines is fixed. Our algorithm considers three
classes of jobs as introduced by Sevastianov and Woeginger~\cite{Sevastainov}
and used by several authors for a variety of makespan minimization in shops
(see for instance the extension by Jansen \etal for the job shop~\cite{Jansen}%
). Notice that our approximation algorithm does not require to solve any linear program.

\subsection{Description of the Algorithm}

Let $\varepsilon$ be a fixed positive number. We describe how to design an
algorithm, polynomial in the size of the inputs, with a performance ratio of
$(1+\varepsilon)$ for problem $Om||L_{\max}$. As a shorthand, let
$\overline{\varepsilon}=\frac{\varepsilon}{2m(m+1)}$. Recall that
$P=\max_{i=1}^{m}\{\sum_{j=1}^{n}p_{ij}\}$ is the maximal workload of a
machine. For a given integer $k$, we introduce the following subsets of jobs
$\mathcal{B}$, $\mathcal{S}$ and $\mathcal{T}$:%

\begin{align}
\mathcal{B}  &  =  \left\{  j\in\mathcal{J}\ |\ \max_{i=1}^{m} p_{i,j}%
\geq\overline{\varepsilon}^{k}P\right\} \\
\mathcal{S}  &  =  \left\{  j\in\mathcal{J} \ |\ \overline{\varepsilon}^{k}P >
\max_{i=1}^{m} p_{i,j}\geq\overline{\varepsilon}^{k+1}P\right\} \\
\mathcal{T}  &  =  \left\{  j\in\mathcal{J} \ |\ \overline{\varepsilon}^{k+1}P
> \max_{i=1}^{m} p_{i,j}\right\}
\end{align}
%

By construction, for any integer $k$, sets $\mathcal{B}$, $\mathcal{S}$ and $\mathcal{T}$ define a partition of the jobs.
For the ease of understanding, the jobs of $\mathcal{B}$ will be often called
the \textit{big} jobs, the jobs of $\mathcal{S}$ the \textit{small} jobs, and
the jobs of $\mathcal{T}$ the \textit{tiny} jobs. Notice that the duration of
any operation of a small jobs is less than $\overline{\varepsilon}^{k}P$, and
less than $\overline{\varepsilon}^{k+1}P$ for a tiny job. The choice of $k$
relies on the following proposition, which comes from Sevastianov and Woeginger~\cite{Sevastainov}:

\begin{proposition} ~\cite{Sevastainov}
\label{Proposition:k} There exists an integer $k\leq\lceil\frac{m}%
{\varepsilon}\rceil$ such that
\begin{equation}
p(\mathcal{S})\leq\varepsilon P\label{cond}%
\end{equation}
where $p(\mathcal{S})=\sum_{j\in\mathcal{S}}\sum_{i=1}^{m}p_{ij}$ is the total
amount of work to perform for the jobs of $\mathcal{S}$. Moreover, for the big jobs, we have:
\begin{equation}
\left\vert \mathcal{B}\right\vert \leq\frac{m}{\overline{\varepsilon}^{k}%
}\label{cond2}%
\end{equation}

\end{proposition}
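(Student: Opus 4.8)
The plan is to use a pigeonhole (averaging) argument over the admissible values of $k$. First I would make the dependence on $k$ explicit by writing $\mathcal{S}_{k}$ for the set of small jobs associated with the parameter $k$, that is, the jobs $j$ whose largest operation $\max_{i}p_{ij}$ lies in the interval $[\overline{\varepsilon}^{k+1}P,\ \overline{\varepsilon}^{k}P)$. The crucial observation is that, since $0<\overline{\varepsilon}<1$, these intervals are pairwise disjoint for distinct values of $k$; hence the sets $\mathcal{S}_{k}$ classify the jobs according to the geometric scale of their largest operation, and in particular $\mathcal{S}_{1},\dots,\mathcal{S}_{\lceil m/\varepsilon\rceil}$ are mutually disjoint.

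Next I would bound the total workload. Since each of the $m$ machines carries workload at most $P$, the total amount of work over all jobs satisfies $\sum_{j\in\mathcal{J}}\sum_{i=1}^{m}p_{ij}\leq mP$. Because the sets $\mathcal{S}_{k}$ are disjoint, summing $p(\mathcal{S}_{k})$ over $k=1,\dots,\lceil m/\varepsilon\rceil$ counts each job at most once, so this sum is also at most $mP$. Averaging over the $\lceil m/\varepsilon\rceil$ values of $k$, there must exist at least one index $k$ in this range with
\[
p(\mathcal{S}_{k})\leq\frac{mP}{\lceil m/\varepsilon\rceil}\leq\varepsilon P,
\]
which is exactly (\ref{cond}), with the selected $k$ satisfying $k\leq\lceil m/\varepsilon\rceil$.

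For the bound (\ref{cond2}) on the number of big jobs I would argue directly from the definition. Every job $j\in\mathcal{B}$ has $\max_{i}p_{ij}\geq\overline{\varepsilon}^{k}P$, so its total processing time $\sum_{i=1}^{m}p_{ij}$ is at least $\overline{\varepsilon}^{k}P$. Summing over the big jobs and again invoking the total-workload bound gives $|\mathcal{B}|\,\overline{\varepsilon}^{k}P\leq mP$, whence $|\mathcal{B}|\leq m/\overline{\varepsilon}^{k}$.

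I do not expect a genuine obstacle here: the argument is a clean disjointness-plus-averaging estimate, the same device used by Sevastianov and Woeginger~\cite{Sevastainov}. The only care needed is bookkeeping, namely fixing exactly $\lceil m/\varepsilon\rceil$ consecutive integers $k$ so that the averaging denominator matches the claimed bound, and checking the direction of the inequalities when the intervals $[\overline{\varepsilon}^{k+1}P,\ \overline{\varepsilon}^{k}P)$ are stacked, which is immediate since $\overline{\varepsilon}<1$.
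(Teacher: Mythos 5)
Your proposal is correct and follows essentially the same argument as the paper: both rest on the disjointness of the sets $\mathcal{S}_{k}$ together with the total-workload bound $\sum_{i}\sum_{j}p_{ij}\leq mP$, the only difference being that you phrase the pigeonhole step as an averaging argument while the paper phrases it as a proof by contradiction. The derivation of the bound on $|\mathcal{B}|$ is identical to the paper's.
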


\begin{proof}
Let us denote $z=\lceil m/{\varepsilon}\rceil$. Observe that for a given value
$k$, the duration of the largest operation of any small job belongs to the
interval $I_{k}=[\overline{\varepsilon}^{k+1}P,\overline{\varepsilon}^{k}P[$.
Assume for the sake of contradiction that, for all values $k=1,\dots,z$, the
corresponding set $\mathcal{S}_{k}$ does not verify Condition (\ref{cond}). As
a consequence, $p(\mathcal{S}_{k})>\varepsilon P$ for each $k=1,\dots,z$.
Since these sets are disjoint, it results that the total processing time of
the operations of the jobs whose the duration of its largest operation belongs
to $\in\left[  \overline{\varepsilon}^{z+1}P,P\right[  $ is strictly greater
than $z\varepsilon P$. However, this amount of work is bounded by the total
work of the instance. We have :
\[
z\varepsilon P<\sum_{i=1}^{m}\sum_{j=1}^{n}p_{ij}\leq mP
\]
Thus $z<m/\varepsilon$, which contradicts our definition of $z$. It follows
that at least one interval $\mathcal{I}_{k}\mathcal{=}[\overline{\varepsilon
}^{k+1}P,\overline{\varepsilon}^{k}P[$ with $1\leq k\leq z$ is suitable to
contain the values of the large operations of subset $\mathcal{S}$ such that
$p(\mathcal{S})\leq\varepsilon P$.

To prove Inequality (\ref{cond2}), we can observe that the total processing
time of the operations of $\mathcal{B}$ is bounded by $mP$. Thus, $\left\vert
\mathcal{B}\right\vert \overline{\varepsilon}^{k}P\leq mP$ must hold and
Inequality (\ref{cond2}) follows.
\end{proof}

Notice that, for a fixed value $m$ of machines, only a constant number $\lceil
m/{\varepsilon}\rceil$ of values must be considered for $k$. Hence, an integer
$k$ verifying the conditions of Proposition~\ref{Proposition:k} can be found
in linear time. Assume from now that $k$ has been chosen according to
Proposition~\ref{Proposition:k}. In order to present our approach, let us
explain how the different sets $\mathcal{S}$, $\mathcal{B}$ and $\mathcal{T}$
of jobs are scheduled in our PTAS. Since set $\mathcal{S}$ represents a very
small work, we can schedule it first. Clearly, its last operation cannot
complete after time $t(\mathcal{S})\leq\varepsilon P$ in a list schedule.
Since set $\mathcal{B}$ has a fixed number of jobs, we can afford to consider
all the ways to sequence them. For that, we discretize the time, considering a
time step $\delta=\overline{\varepsilon}^{k+1}P$. Finally, for each assignment
of the big jobs, we schedule the tiny jobs using simply Jackson's list
scheduling algorithm. One originality of our approach is the possibility for a
tiny job to push a big job in order to fit before it. More precisely, if the
tiny job the list scheduling algorithm is considering cannot complete before
the start of the next big job on its machine, say $b$, then we force its
schedule by shifting right the operation of job $b$ as much as necessary. This
shifting is special in twofolds : first, we also shift right of the same
amount of time \textit{all} the operations of the big jobs starting after job
$b$. Second, the operation of job $b$ is then \textit{frozen}, that is, it
cannot be pushed again by a tiny job. Hence, an operation of a big job can be
pushed at most once by a tiny job, but can be shifted right a lot of times,
due to the push of other operations of some big jobs. A more formal
description of our algorithm can be given as follows:

\vspace*{1ex}

\texttt{ALGORITHM PTAS}

\begin{enumerate}
\item Schedule first jobs of $\mathcal{S}$ using any list scheduling algorithm
between time $0$ to time $p(\mathcal{S})$ (the cost factor of this
simplification will not be more than $1+\varepsilon$).

\item Let $\delta=\overline{\varepsilon}^{k+1}P$. Consider all the time
intervals between $p(\mathcal{S})$ and $mP$ of length $\delta$ (the number of
these intervals is a constant for a fixed $\varepsilon$).

\item Enumerate all the schedules of jobs in $\mathcal{B}$ between
$p(\mathcal{S})$ and $mP$. Here, a schedule is reduced to an assignment of the
operations to starting times of the time intervals defined in the previous
step (the cost factor of this simplification will not be more than
$1+\varepsilon$).

\item Complete every partial schedule generated in the last step by adding the
jobs of $\mathcal{T}$. The operations of $\mathcal{T}$ are added by applying a
list scheduling algorithm using Jackson's order (i.e., when several operations
are available to be performed we start by the one of the largest delivery
time). Note that if an operation cannot fit in front of a big job $b$, then we
translate $b$ and all the next big jobs by the same necessary duration to make
the schedule feasible. The operation of job $b$ is then frozen, and cannot be shifted any more.

\item Return the best feasible schedule found by the algorithm.
\end{enumerate}

\subsection{Analysis of the Algorithm}


We start by introducing some useful notations. Consider a schedule $\pi$. For
each machine $i$, we denote respectively by $s_{ir}$ and $e_{ir}$ the start
time and completion time of the $r$th operation of a big job on machine $i$,
for $r=1,\dots,|\mathcal{B}|$. By convenience we introduce $e_{i0}=0$. For
short we call the \textit{grid} the set of all the couples $($resource
$\times$ starting time$)$ defined in Phase $(2)$ of the algorithm. Recall that in
the grid the starting times are discretized to the multiples of $\delta$.
Notice that our algorithm enumerates in Phase $(3)$ all the assignments of
big job operations to the grid. Phase $(4)$ consists in scheduling all the
tiny jobs in-between the big jobs. In the following, we call a time-interval
on a machine corresponding to the processing of a big job a \textit{hole}, for
the machine is not available to perform the tiny jobs. The duration of
the $r$th hole on machine $i$, that is $e_{ir}-s_{ir}$, is denoted by $h_{ir}%
$. By analogy to packing, we call a \textit{bin} the time-interval between two
holes. The duration of the $r$th bin on machine $i$, that is $s_{ir}%
-e_{i,r-1}$, is denoted by $a_{ir}$. We also introduce $H_{ir}=h_{i1}%
+\dots+h_{ir}$ and $A_{ir}=a_{i1}+\dots+a_{ir}$, that is the overall duration
of the $r$ first holes and bins, respectively, on machine $i$.\newline

Now consider an optimal schedule $\pi^{\ast}$. With immediate notations, let
$s_{ir}^{\ast}$ be the start time of the $r$th operations of a big job on the
machine $i$, and let $A_{ir}^{\ast}$ be the overall duration of the $r$ first
bins. For the ease of the presentation, we assume in the reminder, without
loss of generality, that we have no small jobs to schedule : Indeed,  Phase (1) does not increase the length of
the schedule by more than $\varepsilon P\leq\varepsilon \LmaxOPT$. We
say that an assignment to the grid is \textit{feasible} if it defines a
feasible schedule for the big jobs. The next lemma shows that there exists a
feasible assignment such that each operation of the big jobs is delayed,
compared to an optimal schedule, by at least $2m\delta$ time units and by at
most $(2+|\mathcal{B}|)m\delta$ time units.

\ \ 

\begin{lemma}
\label{lemma:affectation} There exists a feasible assignment $\bar{s}$ to the
grid such the operations of the big jobs are sequenced in the same order, and
for every machine $i$ and index $r$ we have:
\[
s_{ir}^{\ast}+2m\delta\leq\bar{s}_{ir}\leq s_{ir}^{\ast}+(2+|\mathcal{B}%
|)m\delta
\]

\end{lemma}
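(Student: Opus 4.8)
The plan is to construct the feasible assignment $\bar{s}$ greedily, processing the big-job operations machine by machine and, within each machine, in the order they appear in the optimal schedule $\pi^{\ast}$. For each operation I would round its optimal start time $s_{ir}^{\ast}$ up to the grid, but with a controlled additive offset that guarantees the lower bound $s_{ir}^{\ast}+2m\delta\leq\bar{s}_{ir}$. The natural definition is something like $\bar{s}_{ir}=\Dceil{s_{ir}^{\ast}}+2m\delta$, where $\Dceil{\cdot}$ denotes rounding up to the nearest multiple of $\delta$. Since rounding up costs strictly less than $\delta$, and the $+2m\delta$ shift is built in, the lower bound is immediate. The content of the lemma is really in the \emph{upper} bound together with \emph{feasibility}: I must show that this rounding-plus-shift keeps all operations on the grid, preserves the order, respects the machine and job disjointness constraints, and does not accumulate more than $(2+|\mathcal{B}|)m\delta$ total delay on any single operation.

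\textbf{Establishing feasibility and the upper bound.} For a single machine, the operations in $\pi^{\ast}$ are already sequenced without overlap, so $s_{i,r}^{\ast}\geq e_{i,r-1}^{\ast}=s_{i,r-1}^{\ast}+h_{i,r-1}^{\ast}$. The worry is that rounding a later operation up could make it collide with the rounded version of the previous one if the previous one was itself pushed. The key observation I would use is that $h_{i,r}$, the hole length, is a multiple of $\delta$ in the optimal schedule only if we are lucky; in general the big-job durations need not be multiples of $\delta$. So I expect the cleaner route is to define $\bar{s}_{ir}$ inductively: set $\bar{s}_{i1}=\Dceil{s_{i1}^{\ast}}+2m\delta$, and then $\bar{s}_{i,r}=\max\{\Dceil{s_{i,r}^{\ast}}+2m\delta,\ \Dceil{\bar{s}_{i,r-1}+h_{i,r-1}}\}$, i.e.\ push forward to clear the previous operation when necessary, rounding to the grid. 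Each such forced push on a single machine costs at most $\delta$ beyond the bound from the previous operation, and there are at most $|\mathcal{B}|$ operations per machine, which is where the $|\mathcal{B}|\,m\delta$ slack in the upper bound comes from. Summing the per-machine roundings over the at most $m$ machines a job touches contributes the remaining factor of $m$.

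\textbf{The job-disjointness constraint is the main obstacle.} Preserving per-machine feasibility is routine; the genuinely delicate part is that an open-shop schedule must also respect \emph{job} disjointness — two operations of the same big job cannot overlap, and they sit on different machines. When I round and push operations independently per machine, I could destroy the job-level non-overlap that $\pi^{\ast}$ enjoyed. I expect to handle this by arguing that the uniform $+2m\delta$ base shift, together with the fact that every individual rounding moves a start time forward by strictly less than $\delta$ and there are at most $m$ machines and at most $|\mathcal{B}|$ operations involved, gives enough ``budget'' so that the relative displacement between two operations of the same job stays below what a single $\delta$-gap can absorb; more precisely I would show the two operations of any big job retain a gap of at least the one they had in $\pi^{\ast}$ minus a quantity strictly smaller than the base shift, so no new overlap is created. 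Making this bookkeeping precise — tracking how pushes on one machine interact with the grid positions on another for the same job — is where the careful work lies. Once the displacement of every operation is bounded above by $(2+|\mathcal{B}|)m\delta$ and both constraint types are verified, the stated double inequality follows and the assignment is feasible and order-preserving, completing the proof.
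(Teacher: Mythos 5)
There is a genuine gap, and you have correctly located it yourself: the job-disjointness of the rounded assignment. Your proposed fix does not work. If you round each operation up to the grid independently (per machine) and add the uniform offset $2m\delta$, the offset contributes nothing to feasibility, because a shift applied identically to two operations leaves their relative displacement unchanged. Concretely, take a big job whose operation $A$ on machine $1$ ends at time $e$ in $\pi^{\ast}$ and whose operation $B$ on machine $2$ starts at time $e$ (gap zero, which is perfectly possible in an optimal schedule). Rounding can move the start of $A$ forward by almost $\delta$ while moving the start of $B$ forward by $0$ (if $s_B^{\ast}$ is already a multiple of $\delta$); after adding $2m\delta$ to both, the two operations overlap by almost $\delta$. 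Your claim that the rounded operations ``retain a gap of at least the one they had in $\pi^{\ast}$ minus a quantity strictly smaller than the base shift'' is true but useless: when the original gap is $0$, losing any positive amount already destroys feasibility. The same problem afflicts your per-machine inductive pushes, which are not communicated across machines.

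The paper's proof sidesteps this entirely with one device you are missing: the alignment is done \emph{globally}, not machine by machine. All $m|\mathcal{B}|$ big-job operations are processed in non-decreasing order of their start times in $\pi^{\ast}$; when the current operation is pushed right by some amount less than $\delta$ to reach the next grid point, \emph{all} subsequent operations, on every machine, are translated by that same amount. Consequently the cumulative shift of an operation is non-decreasing along the start-time order, so for any two non-overlapping operations of $\pi^{\ast}$ (whether they conflict via a machine or via a job) the later one is shifted at least as much as the earlier one and the gap between them cannot decrease. Feasibility, including job disjointness, is then immediate, and the upper bound follows because each of the $m|\mathcal{B}|$ alignment steps adds at most $\delta$ to every operation, giving at most $m|\mathcal{B}|\delta$ on top of the initial $2m\delta$. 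Without this global-translation idea (or an equivalent mechanism), your argument does not establish that the assignment is feasible.
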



\begin{proof}
Among all the possible assignments enumerated in Phase $(3)$ for the big
jobs, certainly we consider the following one, which corresponds to a shift of
the optimal schedule $\pi^{\ast}$ restricted to the big jobs :

\begin{itemize}
\item Insert $2m\delta$ extra time units at the beginning of $\pi^{*}$, that
is delay all the operations by $2m\delta$.

\item Align the big jobs to the grid (shifting them to the right)

\item Define the assignment $\bar{s}$ as the current starting times of the
operations of the big jobs.

\end{itemize}

More precisely, to align the big jobs to the grid, we consider sequentially
the operations by non-decreasing order of their starting time. We then shift
right the current operation to the next point of the grid and translate the
subsequent operations of the same amount of time. This translation ensures
that the schedule remains feasible for the big jobs.

By construction each operation is shifted right by at least $2m\delta$ time
units, which implies that $\bar{s}_{ir}\geq s_{ir}^{\ast}+2m\delta$. The
alignment of an operation to the grid again shifts it right, together with all
the subsequent operations, by at most $\delta$ time units. Thus, the last
operation is not shifted more than $m|\mathcal{B}|\delta$ time units by the
alignment. The result follows.
\end{proof}

Now consider the schedule $\pi$ obtained by applying the Jackson's list
scheduling algorithm to pack the tiny jobs between the holes, starting from
the feasible assignment of Lemma~\ref{lemma:affectation}. Notice that, due to
the shift procedure in Phase (4), the starting time of the big jobs (the
holes) can change between the assignment $\bar{s}$ and the schedule $\pi$.
However a hole can be shifted at most $m|\mathcal{B}|$ times since each
operation of a big job is shifted at most once by a tiny job. Moreover the length of a shift
is bounded by the duration of an operation of a tiny job, that is by $\delta$.
In addition, as we shift all the operations belonging to the big jobs, the
length of the bins cannot decrease in the schedule $\pi$. Hence, we have the
two following properties for the schedule $\pi$, which are direct consequences
of Lemma~\ref{lemma:affectation} and of the previous discussion :

\begin{enumerate}
\item Any operation of a big job is only slightly delayed compared to the
optimal schedule $\pi^{\ast}$ : $s_{ir}\leq s_{ir}^{\ast}+2(|\mathcal{B}%
|+1)m\delta$

\item Each bin is larger in $\pi$ than in the optimal schedule. More precisely
we have $A_{ir} \ge A^{*}_{ir}+2m\delta$ for all machine $i$ and all index $r$.
\end{enumerate}

In other words in the schedule $\pi$ we have slightly delayed the big jobs to
give more room in each bin for the tiny jobs. We say that a job $y$ is more
critical than a job $x$ if $y$ has a higher priority in the Jackson's order.
By convention a job is as critical at itself. We have the following lemma:

\begin{lemma}
\label{Lemma:critical} \label{lemma:C} In schedule $\pi$, for every job $x$,
there exists a job $y$ such that :
\[
q_{y}\geq q_{x} \ \mbox{ and } C_{x}\leq C_{y}^{\ast}+2(|\mathcal{B}%
|+1)m\delta
\]

\end{lemma}

\begin{proof}
Let $x$ be a job and $C_{x}$ its completion time in schedule $\pi$. Without
loss of generality we can assume that the last operation of the job $x$ is
processed on the first machine. If $x$ is a big job, that is $x\in\mathcal{B}%
$, we have already noticed that we have $C_{x}\leq C_{x}^{\ast}+2(|\mathcal{B}%
|+1)m\delta$, due to our choice of the big jobs assignment on the grid. Hence,
the inequality of Lemma~\ref{lemma:C} holds for $x$. Thus consider in the
remaining of the proof the case of a tiny job $x$. We denote by $\mathcal{T}%
_{x}$ the subset of tiny jobs that are more critical than $x$ and such that
their operation on the first machine is completed by time $C_{x}$, that is :
\[
\mathcal{T}_{x}=\{y\in\mathcal{T}\ |\ C_{y}^{1}\leq C_{x}^{1}\mbox{ and }q_{y}%
\geq q_{x}\ \}
\]
Observe that our definition implies in particular that $x\in\mathcal{T}_{x}$.
We first establish that in schedule $\pi$, almost all the tiny jobs
processed before $x$ on the first machine are more critical than $x$. That is,
the schedule $\pi$ essentially follows the Jackson's sequence for the tiny
jobs. Let $r$ be the index of the bin where $x$ completes in schedule $\pi$. For short we denote by $A_{1}(x)$ the overall time available for
processing tiny jobs on the first machine over the time-interval $[0,C(x)]$, that is $A_{1}(x)=C_{x}-H_{1,r}$. 
We also denote by $p_{1}(\mathcal{T}_{x})$ 
the total processing time of the operations 
of $\mathcal{T}_{x}$ on the first machine. We claim that :

\begin{equation}
p_{1}(\mathcal{T}_{x})\geq A_{1}(x)-2(m-1)\delta\label{eq:p(Sx)}%
\end{equation}
 
If at every available instant on the first machine till the completion of $x$
an operation of $\mathcal{T}_{x}$ is processed in $\pi$, then clearly we have
$A_{1}(x)=p_{1}(\mathcal{T}_{x})$ and Inequality~\ref{eq:p(Sx)} holds.
Otherwise, consider a time interval $I=[t,t^{\prime}]$, included in a bin, such
that no task of $\mathcal{T}_{x}$ is processed. We call such an interval
\textit{non-critical} for $x$. It means that during $I$, either some idle
times appear on the first machine, and/or some jobs less critical than $x$
have been processed. However, due to the shift procedure and the Jackson's
list used by the algorithm, the only reason for not scheduling $x$ during $I$
is that this job is not available by the time another less critical job $z$
is started. 
Notice that in an open-shop environment, a job $x$ is not available on the first machine only if one of
its operations is being processed on another machine. As a consequence, the
interval $I$ necessarily starts during the processing of $x$ on another
machine, that is $t\in\lbrack t_{i,x},C_{i,x}]$ for some machine $i$. This
holds for any idle instant and any time an operation is started in interval
$I$. As a consequence, the interval $I$ cannot finish later than the
completion of $x$ on another machine $i^{\prime}$, plus the duration of a less
critical (tiny) job $z$ eventually started on the first machine during time
interval $[t_{i^{\prime},x},C_{i^{\prime},x}]$. Since all the jobs are tiny,
the overall duration of the non-critical intervals for $x$ is thus bounded by
$\sum_{i=2}^{m}(p_{i,x}+\delta)$, which is at most equal to $2(m-1)\delta$.
Inequality~\ref{eq:p(Sx)} follows.\newline

Now let $y$ be the job of $\mathcal{T}_{x}$ that completes last on the first
machine in the optimal schedule $\pi^{\ast}$. Let $r^{\ast}$ be the index of
the bin where $y$ is processed on the first machine in $\pi^{\ast}$, and let
$A_{1}^{\ast}(y)$ be the total available time for tiny jobs in $\pi^{\ast}$
before time $C_{1,y}^{\ast}$, that is $A_{1}^{\ast}(y)=C_{1,y}^{\ast
}-H_{1,r^{\ast}}^{\ast}$. Recall that $r$ is the number of bins used in 
schedule $\pi$ to process all the operations of $\mathcal{T}_{x}$ on the
first machine. We prove that the optimal schedule also uses (at least) this
number of bins, that is $r^{\ast}\geq r$. Indeed, by the conservation of work,
we have that $A_{1}^{\ast}(y)\geq p_{1}(\mathcal{T}_{x})$. Using
Inequality~\ref{eq:p(Sx)} we obtain that $A_{1}^{\ast}(y)\geq A_{1}%
(x)-2(m-1)\delta$. By definition of $r$ and $r^{\ast}$, we also have
$A_{1,r-1}\leq A_{1}(x)$ and $A_{1}^{\ast}(y)\leq A_{1,r^{\ast}}^{\ast}$.
Hence, the following inequality must hold:
\[
A_{1,r-1}\leq A_{1,r^{\ast}}^{\ast}+2(m-1)\delta
\]
However, we have observed that our choice of the assignment of the big jobs
to the grid ensures that for any index $l$, $A_{1,l}^{\ast}+2m\delta\leq
A_{1,l}$, which implies that we have $A_{1,r-1}+2\delta\leq
A_{1,r^{\ast}}$. As a consequence, inequality $A_{1,r-1}<A_{1,r^{\ast}}$ must
hold. Since $A_{1,l}$ represents the total length of the $l$ first bins in
$\pi$, which is obviously non-decreasing with $l$, it implies that $r\leq
r^{\ast}$.

It means that in $\pi^{\ast}$, task $y$ cannot complete its operation on
the first machine before the first $r$ big tasks. We can conclude the proof of
Lemma~\ref{lemma:C} by writing that, on one hand, $C_{y}^{\ast}\geq
p_{1}(\mathcal{T}_{x})+H_{1,r}$, and, on the other hand, $C_{x}\leq
p_{1}(\mathcal{T}_{x})+2(m-1)\delta+H_{1,r}$. As a consequence, $x$ does not
complete in $\pi$ latter than $2(m-1)\delta$ times units after the completion
time of $y$ in $\pi^{\ast}$. Since by definition $y$ is more critical than
$x$, Lemma~\ref{lemma:C} follows.
\end{proof}

Finally, we can conclude that the following theorem holds:

\begin{theorem}
Problem $Om||L_{\max}$ admits a PTAS.
\end{theorem}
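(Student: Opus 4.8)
The plan is to assemble the two structural lemmas already established into a running-time and approximation guarantee. First I would observe that the algorithm is correct by construction: Phase~(3) enumerates all grid-assignments of the big jobs, and by Lemma~\ref{lemma:affectation} one of these assignments is the shifted optimal assignment $\bar{s}$; hence the schedule $\pi$ analysed just before the theorem is among the candidates examined by the algorithm, so the returned schedule is at least as good as $\pi$. It therefore suffices to bound $\Lmax(\pi)$ in terms of $\LmaxOPT$.

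The core of the proof is to turn Lemma~\ref{lemma:C} into a bound on the objective. Let $x$ be the job realising $\Lmax(\pi)$, so that $\Lmax(\pi)=C_x+q_x$. By Lemma~\ref{lemma:C} there is a job $y$ with $q_y\geq q_x$ and $C_x\leq C_y^{\ast}+2(|\mathcal{B}|+1)m\delta$. Adding $q_x$ to both sides and using $q_x\leq q_y$, I obtain
\[
\Lmax(\pi)=C_x+q_x\leq C_y^{\ast}+q_y+2(|\mathcal{B}|+1)m\delta\leq \LmaxOPT+2(|\mathcal{B}|+1)m\delta.
\]
The remaining task is to show that the additive error term $2(|\mathcal{B}|+1)m\delta$ is at most $\varepsilon\LmaxOPT$ (up to the separately accounted $\varepsilon P$ from dropping the small jobs in Phase~(1)). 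Here I substitute $\delta=\overline{\varepsilon}^{k+1}P$ and the bound $|\mathcal{B}|\leq m/\overline{\varepsilon}^{k}$ from Proposition~\ref{Proposition:k}. Then
\[
2(|\mathcal{B}|+1)m\delta\leq 2\Bigl(\tfrac{m}{\overline{\varepsilon}^{k}}+1\Bigr)m\,\overline{\varepsilon}^{k+1}P\leq 2m(m+1)\overline{\varepsilon}\,P,
\]
and since $\overline{\varepsilon}=\frac{\varepsilon}{2m(m+1)}$ this collapses to exactly $\varepsilon P\leq\varepsilon\LmaxOPT$. Combining the error from Phase~(1) and this term gives $\Lmax(\pi)\leq(1+O(\varepsilon))\LmaxOPT$; rescaling $\varepsilon$ by the appropriate constant (or folding the factors into the definition of $\overline{\varepsilon}$, which is precisely why it carries the factor $2m(m+1)$) yields the desired $(1+\varepsilon)$ guarantee.

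Finally I would verify the running time is polynomial for fixed $m$. The number of grid points between $p(\mathcal{S})$ and $mP$ is $mP/\delta=m\,\overline{\varepsilon}^{-(k+1)}$ per machine, a constant for fixed $m$ and $\varepsilon$ since $k\leq\lceil m/\varepsilon\rceil$; the number of big jobs $|\mathcal{B}|$ is likewise bounded by a constant, so the number of assignments enumerated in Phase~(3) is at most the (constant) number of grid positions raised to the (constant) power $m|\mathcal{B}|$, hence constant. Each candidate is completed by a single Jackson list-scheduling pass on the tiny jobs, which is polynomial in $n$. The main obstacle, already handled inside Lemma~\ref{lemma:C}, is the interaction between the tiny-job packing and the pushing of big jobs; for the theorem itself the only delicate point is bookkeeping the two sources of additive error (the dropped small jobs and the grid/shift displacement) and confirming that the constant in $\overline{\varepsilon}$ is chosen so their sum is controlled by $\varepsilon\LmaxOPT$.
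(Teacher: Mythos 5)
Your proposal is correct and follows essentially the same route as the paper: it invokes Lemma~\ref{lemma:C} for the critical job, bounds the additive gap $2(|\mathcal{B}|+1)m\delta$ by $2m(m+1)\overline{\varepsilon}P=\varepsilon P\leq\varepsilon\LmaxOPT$ via Proposition~\ref{Proposition:k} and the choice of $\overline{\varepsilon}$, and argues the grid/enumeration sizes are constants for fixed $m$ and $\varepsilon$. The only cosmetic difference is that the paper treats the case where the critical job is small as a separate case rather than folding it into an $O(\varepsilon)$-and-rescale step, but the substance is identical.
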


\begin{proof}
We first establish that the maximum lateness of the schedule returned by our
algorithm is bounded by $(1+\varepsilon)\LmaxOPT$. In schedule $\pi$
defined in Lemma \ref{lemma:C}, let $u$ be a job such that $\Lmax(\pi)=C_{u}+q_{u}$. 
If job $u$ is a small job, then it completes before time $p(\mathcal{S})$. Due to our choice of the partition, see Proposition~\ref{Proposition:k},
we have $L_u \le \varepsilon P + q_u \le (1+\varepsilon)\LmaxOPT$. Hence, in the following, we restrict to the case 
where $u \notin \mathcal{S}$, that is, job $u$ is either a big or a tiny job. According to Lemma~\ref{lemma:C}, there exists a job $y$ such that :
\[
q_{u}\leq q_{y}\ \mbox{ and }\ C_{u}\leq C_{y}^{\ast}+2(|\mathcal{B}%
|+1)m\delta
\]
We have :
\begin{align*}
L_{\max}\left(  \pi\right)   &  =C_{u}+q_{u}\\
&  \leq C_{y}^{\ast}+2(|\mathcal{B}|+1)m\delta+q_{y}\\
&  \leq L_{\max}^{\ast}+2(|\mathcal{B}|+1)m\delta
\end{align*}

As a consequence, using Proposition~\ref{Proposition:k}, we can write that for
any fixed $\varepsilon\le1$ :
\begin{align*}
L_{\max}\left(  \pi\right)  -L_{\max}^{\ast}  &  \leq2(\frac{m}{\overline
{\varepsilon}^{k}}+1)m\overline{\varepsilon}^{k+1}P\\
&  \leq2(\frac{m+1}{\overline{\varepsilon}^{k}})m\overline{\varepsilon}%
^{k+1}P\\
&  =2(m+1)m\overline{\varepsilon}P\\
&  \leq\varepsilon L_{\max}^{\ast}%
\end{align*}

Hence, our algorithm has a performance guarantee of $(1+\varepsilon)$. Let us
now check its time complexity. First, the identification of $k$ and the three
subsets $\mathcal{B}$, $\mathcal{S}$ and $\mathcal{T}$ can be done in
$O(\frac{m^{2}.n}{\varepsilon})$. Second, the scheduling of the jobs of $\mathcal{S}$ can clearly be performed in polynomial time 
(in fact, in linear time in $n$ for $m$ fixed). Now, let us consider the scheduling of the big jobs.
The number $\Delta$ of points in the grid is bounded by:
\begin{align*}
\Delta & \le m \times \frac{mP}{\delta} \ =\  \frac{m^{2}}{\overline{\varepsilon}^{k+1}} \\
       & \le  \frac{m^{2}}{\overline{\varepsilon}^{2+\frac{m}{\varepsilon}}} \\
       & \le m^{2}\left( \frac{2m(m+1)}{\varepsilon} \right)^{2+\frac{m}{\varepsilon}}
\end{align*}
The second inequality comes from the fact that $k \le \lceil m/\varepsilon \rceil$, due to Proposition~\ref{Proposition:k}.
The last bound is clearly a constant for $m$ and $\varepsilon$ fixed. 
The number of possible assignments of jobs of $\mathcal{B}$ in Phase
$(3)$ is bounded by

\begin{align*}
(m| \mathcal{B} |) ^ \Delta & \le \left( \frac{m^2}{\overline{\varepsilon}^{k}} \right)^\Delta   \\
\end{align*}
Hence, only a constant number of assignments to the grid are to be considered.
%
%
Phase $(4)$ completes every feasible assignment in a polynomial time. Phase
$(5)$ outputs the best solution in a linear time of the number of feasible
assignments. In overall, the algorithm is polynomial in the size of the
instance for a fixed $\varepsilon$ and a fixed $m$.
\end{proof}

\section{Conclusion}

In this paper we considered an open question related to the existence of PTAS
to the $m$-machine open shop problem where $m$ is fixed and the jobs have
different delivery times. We answered successfully to this important question.
This represents the best possible result we can expect due to the strong
NP-hardness of the studied problem.

Our perspectives will be focused on the study of other extensions. Especially,
the problem with release dates seems to be very challenging.

\end{document}